\newcommand{\ourscheme}{ByITFL}
\newcommand{\TotalUser}{\ensuremath{n}}
\newcommand{\Byzantine}{\ensuremath{b}}
\newcommand{\Colluding}{\ensuremath{t}}
\newcommand{\Dropout}{\ensuremath{p}}
\newcommand{\partition}{\ensuremath{m}}
\newcommand{\Dataset}{\ensuremath{D}}
\newcommand{\model}{\ensuremath{\bm{w}}}
\newcommand{\dimension}{\ensuremath{d}}
\newcommand{\globalIteration}{\ensuremath{g}}
\newcommand{\localIteration}{\ensuremath{c}}
\newcommand{\learningRate}{\ensuremath{\eta}}
\newcommand{\trustscore}{\ensuremath{\mathrm{TS}}}
\newcommand{\modelupdate}{\ensuremath{\mathbf{g}}}
\newcommand{\realnormalmodelupdate}{\ensuremath{\tilde{\mathbf{g}}}}
\newcommand{\normalmodelupdate}{\ensuremath{\bar{\mathbf{g}}}}
\newcommand{\sumone}{\ensuremath{\Sigma_1}}
\newcommand{\sumtwo}{\ensuremath{\mathbf{\Sigma}_2}}
\newcommand{\random}{\ensuremath{\lambda}}
\newcommand{\approdegree}{\ensuremath{k}}
\DeclareMathOperator{\ReLU}{ReLU}
\DeclareMathOperator{\AGG}{AGG}
\newtheorem{thm}{Theorem}
\newtheorem{proposition}{Proposition}
\def\BibTeX{{\rm B\kern-.05em{\sc i\kern-.025em b}\kern-.08em
    T\kern-.1667em\lower.7ex\hbox{E}\kern-.125emX}}
\begin{document}

\title{Byzantine-Resilient Secure Aggregation for Federated Learning Without Privacy Compromises \\
\thanks{This project has received funding from the German Research Foundation (DFG) under Grant Agreement Nos. BI 2492/1-1 and WA 3907/7-1.}
}

\author{\IEEEauthorblockN{Yue Xia, Christoph Hofmeister, Maximilian Egger, Rawad Bitar}
\IEEEauthorblockA{
School of Computation, Information and Technology, Technical University of Munich, Munich, Germany \\
\{yue1.xia, christoph.hofmeister, maximilian.egger, rawad.bitar\}@tum.de}
}
\maketitle

\begin{abstract}
Federated learning (FL) shows great promise in large scale machine learning, but brings new risks in terms of privacy and security. We propose ByITFL, a novel scheme for FL that provides resilience against Byzantine users while keeping the users' data private from the federator and private from other users. The scheme builds on the preexisting non-private FLTrust scheme, which tolerates malicious users through trust scores (TS) that attenuate or amplify the users' gradients. The trust scores are based on the ReLU function, which we approximate by a polynomial. The distributed and privacy-preserving computation in ByITFL is designed using a combination of Lagrange coded computing, verifiable secret sharing and re-randomization steps.
ByITFL is the first Byzantine resilient scheme for FL with full information-theoretic privacy.

\end{abstract}

\begin{IEEEkeywords}
Byzantine-Resilience, Federated Learning, Information-Theoretic Privacy, Secure Aggregation
\end{IEEEkeywords}

\section{Introduction}
Federated learning (FL), first proposed in~\cite{mcmahan2017communication}, is an emerging machine learning paradigm
that allows users to train a model under the coordination of a central entity called federator while keeping their private data local. The training is iterative. 
Per iteration, the federator sends the current global model to the users, who update it based on their local training data, and return the result is sent back to the federator.
Using an aggregation rule, the federator combines the users' updates into the new global model. The process repeats until the model attains certain converge criteria.

FL addresses the privacy concerns of traditional centralized machine learning by transmitting users' local model updates instead of their private data directly. However, the local model updates, i.e. gradients or weights, may contain sensitive information about the users' data. Upon receiving the local updates, the federator can perform gradient inversion attacks \cite{zhu2019deep} \cite{geiping2020inverting} to reconstruct the users' private training data. Private aggregation protocols \cite{bonawitz2017practical} \cite{aono2017privacy} are introduced to ensure that the federator obtains an aggregate of the model updates without revealing any additional information about individual updates.

Beyond privacy, FL gives rise to security concerns caused by users %
sending corrupt updates. We call malicious users with arbitrary behavior Byzantine. Using simple linear aggregation as in FedAvg \cite{mcmahan2017communication}, even a single Byzantine worker can force arbitrary aggregation results \cite{blanchard2017machine}; leading the model to sub-optimal solutions or divergence. The primary countermeasure is removing outliers in the users' local updates. %
Many Byzantine-resilient aggregations have been proposed \cite{blanchard2017machine,yin2018byzantine,guerraoui2018hidden,cao2020fltrust,zhao2022fedinv}, in which the federator removes outliers by observing the individual updates, thereby compromising the users' privacy.

This highlights an inherent tension between Byzantine-resilience and privacy. While the former requires access to the individual local updates to prune outliers, the latter requires concealing the individual local updates. BREA~\cite{so2020byzantine} and ByzSecAgg~\cite{jahani2023byzantine} address this problem by utilizing secret sharing schemes to make Krum~\cite{blanchard2017machine}, a distance-based Byzantine-resilient aggregation rule, privacy-preserving. Both schemes are computationally private and leak the pairwise distances between local updates to the federator.
The encoding polynomial of the pairwise distances in BREA is further no longer completely random; hence, the federator may obtain extra knowledge (cf. Section~\ref{sec:scheme} for more details). ByzSecAgg \cite{jahani2023byzantine} alleviates this problem by additional re-randomization. Other Byzantine-resilient secure aggregation schemes \cite{he2020secure,velicheti2021secure,hao2021efficient,ma2022shieldfl,xhemrishi2023fedgt} are either based on clustering and compromise privacy or require two federators. %
Currently, there is no Byzantine-resilient scheme without privacy compromises in the literature.

For this reason, we propose \ourscheme\ (pronounced “byte FL”), a Byzantine-resilient
and information-theoretically (IT) private secure aggregation scheme. For Byzantine-resilience we build on FLTrust \cite{cao2020fltrust}. The federator collects a small root dataset to compute a federator model update and computes a trust score (TS) for each user based on the relative direction of the user's and federator's model update. 
FLTrust uses the rectified linear unit function $\ReLU(x)=\max(0,x)$ in computing the TSs. We approximate $\ReLU(x)$ by a polynomial of degree $\approdegree$ to enable an IT private protocol. Each user's model update is embedded into a finite field through a stochastic quantizer, partitioned into $\partition$ sub-vectors, secret shared with all users by Lagrange Coded Computing (LCC) \cite{yu2019lagrange}, and verified against corruptions using an IT verifiable secret sharing (ITVSS) scheme \cite{BGW}, \cite{asharov2017full}. 
Re-randomization \cite{gennaro1998simplified} is required before reconstructing the aggregation result to perfectly hide the local model updates. %
From the perspective of LCC, Byzantine users can be seen as errors and dropouts as erasures in a Reed-Solomon code, cf. \cite{mceliece1981sharing}.
The federator decodes the aggregate and updates the global model.
With $\approdegree$ the degree of the approximation polynomial and $\partition$ the number of sub-vectors, the proposed scheme is resilient against any $\Byzantine$ Byzantine users, IT private against the federator and against any $\Colluding$ colluding users, and robust against any $\Dropout$ dropouts, as long as $\TotalUser \geq 2\Byzantine+(\approdegree+2)\cdot (\partition+\Colluding-1)+\Dropout+1$. 

\section{System Model and Preliminaries}

We use $[n]$ to denote the set of positive integers $\{1,\cdots,n\}$ and use $\lfloor x \rfloor$ for the largest integer less than or equal to $x$. All vectors are denoted in bold type and scalars are denoted in normal type. $I(X;Y)$ denotes the mutual information between random variables $X$ and $Y$. $H(X)$ denotes the entropy of $X$.

\subsection{System Model}
We consider FL with a semi-honest federator and $\TotalUser$ users, including $\Byzantine$ Byzantine users, $\Colluding$ colluding users and $\Dropout$ dropout users, as is illustrated in Fig.~\ref{fig:fl}. Each user $i$ holds a private local dataset $\Dataset_i$, in addition, we require the federator to collect a small dataset $\Dataset_0$, called root dataset. 
The federator possesses a $\dimension$-dimensional global model $\model \in \mathbb{R}^\dimension$ and coordinates the training process. The goal is to train the global model using private data held by the users to find the optimal global model $\model^*$ as the solution to the optimization problem $\model^* = \operatorname{argmin}_{\model}F(\model),$ 
where $F(\model)$ is the loss function. $\nabla f(\Dataset, \model)$ denotes an unbiased estimator of the true gradient $\nabla F(\model)$, i.e. $\nabla F(\model)=\mathbb{E}_{\Dataset \sim \mathcal{D}}[\nabla f(\Dataset,\model)]$. This is done iteratively. Specifically, in each global iteration $\globalIteration$, the federator broadcasts the current global model $\model^{(\globalIteration)}$ to all users. Each user $i$, for $i \in [n]$, initializes its local model to the current global model, i.e., $\model^{(0)}_i=\model^{(\globalIteration)}$, and updates it for one or more local iterations based on its local dataset $\Dataset_i$
\begin{equation}\label{eq:localtraining}
    \model^{(\localIteration+1)}_i = \model^{(\localIteration)}_i-\learningRate_u \cdot \nabla f(\Dataset_i; \model_i^{(\localIteration)}), 
\end{equation}
where $\learningRate_u$ is the local learning rate and $\localIteration$ is the local iteration. Upon finishing all local training iterations, users send the local model update 
$ %
    \modelupdate_i=\model_i-\model^{(\globalIteration)}
$ %
to the federator. Meanwhile, the federator trains on a small dataset to get $\modelupdate_0$, which is assumed to be public. Upon receiving the local updates, the federator aggregates them according to some aggregation rule $\AGG$, i.e., $\modelupdate = \AGG(\modelupdate_0,\modelupdate_1, \cdots, \modelupdate_n),$ 
and computes the global model for the next iteration \footnote{
Although we present our scheme in the simple gradient descent setting, it does not depend on the exact update rule and is applicable to, e.g., momentum and higher order methods and adaptive learning rate schedules. Similarly, it is compatible with additional privacy mechanisms based on adding noise to updates, like differential privacy~\cite{dwork2006differential}.}
$ %
    \model^{(\globalIteration+1)} = \model^{(\globalIteration)}-\learningRate \cdot \modelupdate,
$ %
where $\learningRate$ is the global learning rate. 

\subsection{Threat Model and Defense Goals}
We focus on curious entities with unlimited computing resources, requiring perfect IT privacy. We consider Byzantine attackers. They may arbitrarily deviate from the protocol
and have access to all users' datassets.
\ourscheme\ should be resilient even when up to $\Byzantine$ such users collaboratively misbehave. 

Our scheme guarantees the privacy of honest users' local model updates in each iteration. Up to $\Colluding$ users may collude and share information with each other to guess the private data of honest users. The federator is honest-but-curious, thus honestly conducts the protocol but tries to infer as much sensitive information as possible.

Therefore, for each training iteration, the privacy constraint for \ourscheme\ guarantees that any group of up to $\Colluding$ colluding users, once knowing the current global model and computing their local model updates, which are required for the learning task, and knowing their local datasets, which may give information about what the datasets of the honest users look like, cannot learn any additional information about the local model updates of the other honest users:
\begin{equation}\label{eq:privacy_colluding}
    I(\modelupdate_{[n] \setminus \mathcal{T}}; M_{\mathcal{T}} \mid \modelupdate_{\mathcal{T}}, \Dataset_{\mathcal{T}}, \model^{(\globalIteration)}) = 0,
\end{equation}
and, knowing the current global model, the federator local update and the root dataset, the federator should not gain any information about the local model updates of the honest users beyond the aggregation:
\begin{equation}\label{eq:privacy_federator}
    I(\modelupdate_{[n] \setminus \mathcal{T}}; M_f \mid \modelupdate_0, \Dataset_0, \model^{(\globalIteration)}, \modelupdate) = 0, 
\end{equation}
where $\mathcal{T}$ is the set of colluding users, $\left | \mathcal{T} \right |=\Colluding$, $M_{\mathcal{T}}$ describe the messages received by the colluding parties and $M_f$ denotes the intermediate messages received by the federator.

We consider the possibility of a subset of up to $\Dropout$ users experiencing delays or dropping out during protocol. The protocol should be IT private against the curious-but-honest federator and against any collusion of up to $\Colluding$ users, robust against $\Byzantine$ Byzantine users and at the same time be able to tolerate up to $\Dropout$ users staying silent during the execution.

\begin{figure}[!t]
    \vspace{.05cm}
    \centering
    \input{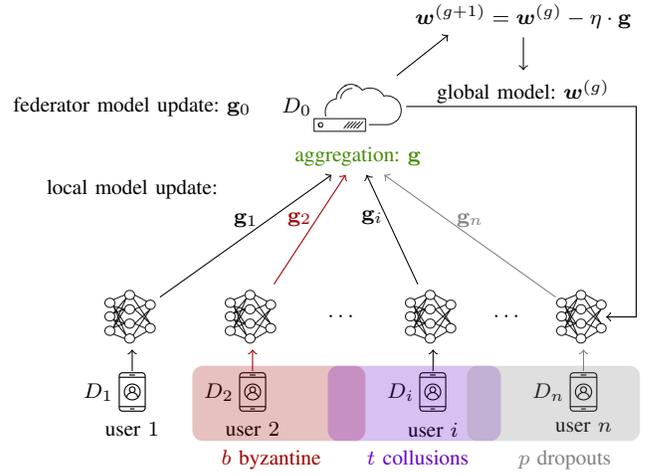}
    \caption{Federated learning system. \vspace{-.5cm}}
    \label{fig:fl}
\end{figure}

\section{\ourscheme}\label{sec:scheme}
We present \ourscheme, which leverages the ability of FLTrust \cite{cao2020fltrust} to provide resilience against Byzantine attacks. Assuming the federator holds a small training dataset and performs local training to obtain a federator model update, \ourscheme\ approximates the ReLU function used to compute the trust scores in FLTrust by a polynomial and uses LCC to provide IT privacy against eavesdroppers. \ourscheme\ consists of the following five main steps, which we will detail in the sequel:
\begin{enumerate}[label=\Alph*.]
    \item Users \textbf{normalize and quantize} their local model updates. The federator model update is treated accordingly.
    \item The \textbf{normalized updates} are partitioned into smaller sub-vectors and \textbf{secret shared} using LCC and ITVSS. %
    \item Users \textbf{validate the normalization} based on the secret shares from other users.
    \item Users \textbf{compute a secret representation} of the aggregation result by evaluating the target polynomials.
    \item The federator receives shares of the aggregation from the users to \textbf{reconstruct the secure aggregation} by decoding an error correcting code %
    and updates the global model.
\end{enumerate}

\subsection{Normalization and Quantization}
To defend against Byzantine attacks performed on the magnitude, the federator and all users first normalize their model update $\modelupdate_i$ to a unit vector
$ %
    \realnormalmodelupdate_i= \frac{\modelupdate_i}{\| \modelupdate_i \|},  \forall i \in \{0,1,\cdots,\TotalUser\},
$ %
so that the impact of extremely large/small local updates, more likely originating from Byzantine users, can be eliminated.

Since the training process is performed in the real domain and LCC (like every IT private secret sharing) works over finite fields, it is essential to transfer the normalized model updates $\realnormalmodelupdate_i \in \mathbb{R}^{\dimension}$ to vectors in a prime field $\normalmodelupdate_i \in \mathbb{F}_p^\dimension$, where $p$ is a large prime. Therefore, users apply an element-wise stochastic quantizer $Q_q(x)$ with $2q+1$ quantization intervals as in \cite{so2020byzantine}, \cite{jahani2023byzantine}. The relation between $p$ and $q$ is explained later. %
Note that the stochastic rounding is unbiased, i.e., $\mathbb{E}_Q[Q_q(x)]=x$. Let $\phi(x) = x+p \mod p$ map integers to values in $\mathbb{F}_p$, the quantization is defined to be
\begin{equation}\label{eq:def_g_bar}
    \normalmodelupdate_i := \phi(q\cdot Q_q(\realnormalmodelupdate_i)).
\end{equation}

\subsection{Sharing of the Normalized Model Updates}
The federator and the users first partition their normalized model updates $\normalmodelupdate_i$ into $\partition$ smaller subvectors
\begin{equation*}
    \normalmodelupdate_i = [\normalmodelupdate_i^{(1)},\normalmodelupdate_i^{(2)},\cdots,\normalmodelupdate_i^{(\partition)}]^T, \forall i \in \{ 0,1,\cdots,\TotalUser\},
\end{equation*}
where each sub-vector is of size $\frac{\dimension}{\partition}$ and $\partition \in [\frac{\TotalUser-\Dropout-1}{\approdegree+2}-\Byzantine-\Colluding+1]$.

We assume the federator model update is public, the federator broadcasts the sub-vectors of $\normalmodelupdate_0$ to the users. Each user $i$ uses LCC \cite{yu2019lagrange} to secret share $\normalmodelupdate_i$ with all users by the degree-$(\Colluding+m-1)$ encoding polynomial
\begin{equation}
    \begin{aligned}
        \bm{u}_i(z) & = \sum_{j \in [\partition]}\normalmodelupdate_i^{(j)} \cdot \prod_{l \in [\partition+t]\setminus \{j\}} \frac{z-\beta_l}{\beta_j-\beta_l} \\
        & + \sum_{j \in [\Colluding]}\bm{r}_i^{(j)} \cdot \prod_{l \in [\partition+t] \setminus \{\partition+j \}} \frac{z-\beta_l}{\beta_{\partition+j}-\beta_l}, \forall i \in [\TotalUser],
    \end{aligned}
\end{equation}
where $\beta_1, \cdots, \beta_{\partition+t}$ are $\partition+t$ distinct elements from $\mathbb{F}_p$ and $\bm{r}^{(j)}$'s are chosen independently and uniformly at random from $\mathbb{F}_p$. Note that $\bm{u}_i(\beta_1)=\normalmodelupdate_i^{(1)}, \cdots, \bm{u}_i(\beta_{\partition})=\normalmodelupdate_i^{(\partition)}$ and the finite field size should be large enough to avoid any wrap around, which we describe in subsection \ref{sub:aggregation}. Secret shares are computed by evaluating $\bm{u}_i(z)$ at $n$ distinct values $\{ \alpha_l \}_{l \in [n]}$, which are selected from $\mathbb{F}_p$ such that $\{ \alpha_l \}_{l \in [n]} \cap \{ \beta_l \}_{l \in [\partition]}=\varnothing$. Hence, each user $j$ receives a secret share of $\normalmodelupdate_i$ from other user $i$, i.e. $\normalmodelupdate_{i,j}=\bm{u}_i(\alpha_j)$, which is a vector of size $\frac{\dimension}{\partition}$, for $i,j \in [\TotalUser]$.

Note that, we leverage the ITVSS protocol from \cite{BGW} to prevent Byzantine users from misbehaving in the secret sharing phase.

\begin{table*}[tb] %
\vspace{.15cm}
\caption{Complexity Analysis with respect to $\TotalUser$, $\dimension$ and $\partition$. Here $\TotalUser$ is the total number of users, $\dimension$ is the dimension of the model updates and the partitioning parameter $\partition \in [\frac{\TotalUser-\Dropout-1}{\approdegree+2}-\Byzantine-\Colluding+1]$. \vspace{-.1cm}}\label{tab:complexity}
\centering
\begin{subtable}{0.4\textwidth} %
\centering
\caption{Communication Complexity}
\begin{tabular}{c|c|c}\label{tab:comm}
    & Per-User & Federator\\ \hline
BREA \cite{so2020byzantine} & $O(dn+n^2)$ & $O(dn+n^3)$ \\ \hline
ByzSecAgg \cite{jahani2023byzantine} & $O(\frac{d}{\partition}n+n^2)$ & $O(\frac{d}{\partition}n+n^3)$ \\ \hline
\ourscheme & $O(\frac{d}{m}n^3+n^4)$ & $O(\frac{d}{m}n+n^2)$ 
\end{tabular}
\end{subtable}%
\begin{subtable}{0.6\textwidth} %
\centering
\caption{Computation Complexity}
\begin{tabular}{c|c|c}\label{tab:comp}
    & Per-User & Federator \\ \hline
BREA \cite{so2020byzantine} & $O(dn\log^2n+dn^2)$ & $O((dn+n^3)\log^2n\log\log n)$ \\ \hline
ByzSecAgg \cite{jahani2023byzantine} & $O(\frac{d}{\partition}n\log^2n+\frac{d}{\partition}n^2)$ & $O((\frac{d}{\partition}n+n^3)\log^2n\log\log n)$ \\ \hline
\ourscheme & \makecell{
$O((\frac{d}{\partition}n^2+n^3)\log^2n\log\log n$ \\ $+ (\frac{d}{\partition}n^3+n^4)\log^2n)$ \\
} & $O((\frac{d}{m}n+n^2)\log^2n\log\log n)$
\end{tabular}
\end{subtable}
\vspace{-0.3cm}
\end{table*}

\subsection{Validation of Normalization}\label{sub:normalization}
Malicious users may misbehave during normalization. 
Thus, upon receiving a secret share, each user $i$ verifies correct normalization by locally computing the squared $l2$-norm $\| \normalmodelupdate_{j,i} \|_2^2$ of the secret shares  for each $j\in [\TotalUser]$ and sending the computed shares to the federator. 
This is possible due to LCC and a re-randomization step before sending computations to the federator, which will be detailed in the next subsection. Upon receiving the computation results, the federator utilizes error correction decoding of the underlying Reed-Solomon code, cf. \cite{mceliece1981sharing}, to reconstruct $\| \normalmodelupdate_i \|_2^2$ for each $\normalmodelupdate_i$ and checks if it is within a certain interval, i.e.,
\begin{equation}\label{eq:checknormalization}
    \left |   \| \normalmodelupdate_i \|_2^2 - \phi(q\cdot Q_q(1))^2 \right | < \varepsilon \cdot q^2,
\end{equation}
where $\varepsilon$ is a predefined threshold and can be set empirically. %
Note that the error correction requires the total number of users $\TotalUser \geq 2\Byzantine+2(\partition+\Colluding-1)+\Dropout+1$. The interval is caused by the accuracy loss due to quantization. If any user does not pass the normalization check, the federator marks them as Byzantine and excludes them from future computations.

\subsection{Users Secure Computation}\label{sub:computation}
In FLTrust \cite{cao2020fltrust}, the federator assigns to each user a trust score 
$ %
    \trustscore_i=\ReLU(\cos(\theta_i)), \forall i \in [\TotalUser], 
$ %
where $\theta_i$ is the angle between the federator's and the user's model update. The federator aggregates the local model updates by averaging the normalized updates weighted by their trust scores. 

Making FLTrust IT private is not straightforward, which is why we propose to approximate the ReLU by a degree-$\approdegree$ polynomial function $h(x)=h_0+h_1x+ \cdots + h_kx^{\approdegree}$.
Therefore, the trust score for each user becomes
\begin{equation*}%
    \trustscore_i \approx h(\cos(\theta_i))=h(\langle \normalmodelupdate_0,\normalmodelupdate_i \rangle), \forall i \in [\TotalUser], 
\end{equation*}
and the aggregation result is
\begin{equation}\label{eq:aggre}
\begin{aligned}
    \modelupdate & =\frac{\| \modelupdate_0 \|}{\sum_{i\in [\TotalUser]}{\trustscore_i}} \cdot \sum_{i\in [\TotalUser]}{(\trustscore_i \cdot \normalmodelupdate_i)}
    = \frac{\| \modelupdate_0 \|}{\sumone} \cdot \sumtwo, 
\end{aligned}
\end{equation}
\begin{equation*}
\begin{aligned}
    \text{where } \sumone &= \sum_{i\in [\TotalUser]}{h(\langle \normalmodelupdate_0,\normalmodelupdate_i \rangle)} \text{ and }
    \sumtwo = \sum_{i\in [\TotalUser]}{(h(\langle \normalmodelupdate_0,\normalmodelupdate_i \rangle) \cdot \normalmodelupdate_i)}.
\end{aligned}
\end{equation*}

Since the federator possesses $\modelupdate_0$, for computing the aggregation $\modelupdate$ in a privacy-preserving manner, the federator only needs to compute the value of $\frac{\sumtwo}{\sumone}$ in a private manner without learning individual users' private information beyond this quotient. The colluding users should learn nothing about other honest users during the computation.

\textit{Privacy Against Colluding Users:}
Both $\sumone$ and $\sumtwo$ are polynomial functions of the model updates $\normalmodelupdate_0$ and $\normalmodelupdate_i$ for $i \in [n]$, where $\normalmodelupdate_0$ is public and $\normalmodelupdate_i$'s are secret shared among the users using LCC. It's worth mentioning that LCC allows the computation of an arbitrary polynomial $f$ with degree $\deg(f)$ over its secret. Suppose user $i$ holds a secret $\bm{s}_i$, the user partitions it and shares it among users via a degree-$(\partition+\Colluding-1)$ encoding polynomial $\bm{u}_i(z)$. Each user $j$, holding its secret share $\bm{s}_{i,j}=\bm{u}_i(\alpha_j)$, is able to compute $f(\bm{s}_{i,j})=f(\bm{u}_i(\alpha_j))$ locally, which is an evaluation of the resulting polynomial $f(\bm{u}_i(z))$ at the point $\alpha_j$. Upon having more than $(m+t-1) \deg(f) + 1$ evaluations from the users, the resulting polynomial $f(\bm{u}_i(z))$ can be interpolated. The desired computation is obtained by evaluating the polynomial $f(\bm{u}_i(z))$ at points $\{ \beta_l \}_{l \in [\partition]}$, i.e., $f(\bm{s}_i)=[f(\bm{s}_i^{(1)}),\cdots, f(\bm{s}_i^{(\partition)})]^T=[f(\bm{u}_i(\beta_1)),\cdots, f(\bm{u}_i(\beta_\partition))]^T$. 

Hence, it is possible to perform the polynomial computations of $\sumone$ and $\sumtwo$ on the secret shares, such that each user obtains an evaluation point of $\sumone$ and $\sumtwo$. This guarantees that any set of up to $\Colluding$ users are not able to learn anything from the shares. 

\textit{Privacy Against the Federator:}
Privacy against the federator has not yet been guaranteed: the reconstructions of the computation results, i.e. $\| \normalmodelupdate_i \|_2^2$, $\sumone$ and $\sumtwo$, cannot perfectly hide the secret values $\modelupdate_1, \cdots, \modelupdate_\TotalUser$ against the federator. LCC is a linear secret sharing scheme that is additively, but not multiplicatively, homomorphic. Given two secrets $a$ and $b$ shared among $n$ users with encoding polynomial $u_a(z)$ and $u_b(z)$, each user $i$, having the secret shares $a_i=u_a(\alpha_i)$ and $b_i=u_b(\alpha_i)$, is able to locally compute the sum of the shares $a_i+b_i=u_a(\alpha_i)+u_b(\alpha_i)=u_{a+b}(\alpha_i)$, which perfectly hides the secrets. %
This property does not hold for multiplication on shares \cite{BGW,asharov2017full}. The product of $a_i$ and $b_i$ results in a secret share of $u_a(z)\cdot u_b(z)$, whose evaluation at $\beta_1$ is indeed $a \cdot b$, 
but is not a completely random polynomial perfectly hiding the secret, i.e. $u_a(z)\cdot u_b(z) \neq u_{a\cdot b}(z)$. %
The federator can learn additional information about $a$ and $b$ beyond $a\cdot b$. 
We follow the re-randomization step from \cite{gennaro1998simplified,asharov2017full}, which involves sub-sharing the users' secret shares using ITVSS \cite{BGW}, and linearly combining to construct re-randomized secret shares.

The users own the re-randomized shares of $\sumone$ and $\sumtwo$. It remains to ensure that the federator obtains the quotient $\sumtwo/\sumone$ without gaining any additional information about $\sumone$ and $\sumtwo$. 
Therefore, each user $i$ 
1) chooses an independent value $\random_i$ uniformly at random from $\mathbb{F}_p$ and secret shares it by LCC and ITVSS among all users,
2) adds the shares of $\random_j$'s from all other user $j$ and obtains the share of\footnote{The case $\random=0$ can be avoided by minor changes, omitted for brevity.}  $\random = \sum_{j=[\TotalUser]}{\random_j}$. 
Each user multiplies the re-randomized shares of $\sumone$ and $\sumtwo$ by their share of $\random$, performs another re-randomization and sends the resulting shares of $\random \sumone$ and $\random \sumtwo$ to the federator.

\subsection{Secure Aggregation}\label{sub:aggregation}
The federator receives the secret shares of $\random \sumone$ and $\random \sumtwo$, where the degree of the encoding polynomial for
$\random \sumone$ is $(\approdegree+1) \cdot (\partition+\Colluding-1)$ and for $\random \sumtwo$ is $(\approdegree+2)\cdot (\partition+\Colluding-1)$. With sufficient number of users sending evaluations to the federator, the federator is able to leverage the error correction property of Reed-Solomon codes \cite{mceliece1981sharing} to decode the values of $\random \sumone$ and $\random \sumtwo$. Therefore, we require the total number of users in the system to be $\TotalUser \geq 2\Byzantine+(\approdegree+2)\cdot (\partition+\Colluding-1)+\Dropout+1$. 

Upon decoding the correct values, the federator computes $\sumtwo/\sumone = \random\sumtwo/\random\sumone$, converts the results from the finite field back to the real domain by de-quantizing by $Q_q(x)^{-1}$ and demapping by $\phi^{-1}$ %
and computes the global model for the next iteration. To ensure the correctness of the result, none of the computations should cause a wrap around in the finite field. Each entry of the normalized gradient is in the range $-q$ to $q$, hence the dot product is in the range $-dq^2$ to $dq^2$. Accounting for the $0$ value, we thus require Thus, we require $p \geq 2\TotalUser \dimension^\approdegree q^{2\approdegree+1}+1$.

\section{Theoretical Analysis}
We analyze the properties of our proposed scheme in theory, beginning with the privacy guarantee achieved by \ourscheme.

\begin{thm}\label{theorem_1} 
\ourscheme\ with $\TotalUser \geq 2\Byzantine+(\approdegree+2)\cdot (\partition+\Colluding-1)+\Dropout+1$ guarantees IT privacy of the honest users' local model updates against any number $\Colluding$ of colluding users or the federator according to \eqref{eq:privacy_colluding} and \eqref{eq:privacy_federator}. %
\end{thm}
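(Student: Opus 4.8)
The plan is to establish the two mutual-information conditions in \eqref{eq:privacy_colluding} and \eqref{eq:privacy_federator} separately, decomposing the view of each adversary type into the messages it receives at each of the five protocol stages and showing that each block is either a deterministic function of quantities the adversary already conditions on, or is perfectly masked by fresh randomness. First I would fix an arbitrary honest-majority iteration \globalIteration\ and the colluding set $\mathcal{T}$ with $|\mathcal{T}|=\Colluding$, and enumerate $M_{\mathcal{T}}$: the LCC/ITVSS shares $\{\normalmodelupdate_{i,j}\}_{i\in[\TotalUser],\,j\in\mathcal{T}}$ of the normalized updates, the sub-shares exchanged during the ITVSS verifications and the re-randomization steps, the shares of the $\random_i$'s, and the shares produced in the norm check and in the secure computation of $\sumone,\sumtwo$.

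For privacy against the colluding users, the key step is the standard LCC/ITVSS argument: the encoding polynomial $\bm{u}_i(z)$ has $\Colluding$ independent uniform coefficient-blocks $\bm{r}_i^{(j)}$, so any $\Colluding$ evaluations $\{\bm{u}_i(\alpha_j)\}_{j\in\mathcal{T}}$ are jointly uniform and independent of $\normalmodelupdate_i$; I would invoke this for every honest $i$, and then handle the downstream messages by noting that (i) every ITVSS sub-sharing again hides its input under a fresh degree-$(\partition+\Colluding-1)$ random polynomial evaluated at the $\Colluding$ colluder points, (ii) each re-randomization step of \cite{gennaro1998simplified,asharov2017full} adds a share of a fresh random polynomial that vanishes on $\{\beta_l\}_{l\in[\partition]}$, so from the colluders' side the re-randomized shares are independent uniform, and (iii) the shares of $\random=\sum_j\random_j$ are uniform because at least one honest $\random_j$ is included. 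Chaining these via the chain rule for mutual information and the fact that conditioning on $\modelupdate_{\mathcal{T}},\Dataset_{\mathcal{T}},\model^{(\globalIteration)}$ only fixes data the colluders generate themselves gives \eqref{eq:privacy_colluding}.

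For privacy against the federator, I would list $M_f$: the reconstructed norms $\|\normalmodelupdate_i\|_2^2$, and the decoded values $\random\sumone$ and $\random\sumtwo$ (the federator sees $\TotalUser$ noisy RS-codeword symbols, but error-correction makes its view a deterministic function of these decoded scalars plus the publicly-known $\normalmodelupdate_0$). The crux is to show $I(\modelupdate_{[\TotalUser]\setminus\mathcal{T}};\,\|\normalmodelupdate_i\|_2^2,\random\sumone,\random\sumtwo \mid \modelupdate_0,\Dataset_0,\model^{(\globalIteration)},\modelupdate)=0$. The norm values are, for honest users, forced to the fixed constant $\phi(q\cdot Q_q(1))^2$ up to the quantization slack (they pass the check in \eqref{eq:checknormalization}), so they carry no information about the \emph{direction} of $\normalmodelupdate_i$; one must argue the residual quantization noise is independent of the useful signal, which follows from the unbiased stochastic quantizer applied coordinatewise. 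Then, conditioned on the aggregate \modelupdate\ (equivalently on $\sumtwo/\sumone$, since $\|\modelupdate_0\|$ is public), the pair $(\random\sumone,\random\sumtwo)=\random\cdot(\sumone,\sumtwo)$ lies on the known line through the origin with slope $\sumtwo/\sumone$, and $\random$ is uniform on $\mathbb{F}_p^\ast$ and independent of $(\sumone,\sumtwo)$; hence $(\random\sumone,\random\sumtwo)$ is uniform on that line and reveals nothing beyond $\modelupdate$. This is the step I expect to be the main obstacle, for two reasons: one must verify that the re-randomization genuinely decouples $\random$ from $(\sumone,\sumtwo)$ at the share level (so the product is re-randomized before the federator sees it, preventing it from reading off the non-uniform polynomial $u_{\sumone}(z)\cdot(\cdot)$), and one must confirm the choice $p \geq 2\TotalUser\dimension^{\approdegree}q^{2\approdegree+1}+1$ prevents wrap-around so that the finite-field quotient equals the intended real-domain ratio, keeping the reduction ``federator's view $=$ function of \modelupdate'' exact rather than approximate. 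Finally I would close by the chain rule, combining the per-stage zero-information claims, and remarking that the bound on \TotalUser\ is exactly what the RS error-correction in both the norm check and the final decoding requires, so all invoked reconstructions succeed.
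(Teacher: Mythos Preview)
Your plan matches the paper's proof closely: both decompose by adversary, enumerate $M_{\mathcal{T}}$ and $M_f$ stage by stage, and reduce the colluder case to the black-box privacy of LCC, ITVSS, and the re-randomization of \cite{gennaro1998simplified}, \cite{asharov2017full}, while for the federator both argue that $\|\normalmodelupdate_i\|_2^2$ is (ideally) the constant $q^2$ and that the pair $(\random\sumone,\random\sumtwo)$ reveals only $\sumtwo/\sumone=\modelupdate/\|\modelupdate_0\|$.

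One point to correct: your statement that ``error-correction makes [the federator's] view a deterministic function of these decoded scalars'' is backwards---the federator receives all $\TotalUser$ shares and hence the entire re-randomized polynomial, whose higher coefficients are not determined by the values at $\{\beta_l\}_{l\in[\partition]}$; it is re-randomization, not decoding, that guarantees those extra coefficients are fresh uniform and independent of the secrets. The paper makes this explicit by splitting $M_f$ into the raw shares $\bm{c}$ (argued uniform via re-randomization) and the reconstructed values, then showing $H(M_f\mid\cdot)=H(\bm{c},\|\normalmodelupdate_i\|_2^2)$ on both sides of the mutual-information expansion. Your later ``at the share level'' remark shows you see the issue, but it should drive the decomposition of $M_f$ rather than appear as a caveat.
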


\begin{proof}
We first prove the privacy against any $\Colluding$ colluding users according to \eqref{eq:privacy_colluding}:
\begin{align}
    & I(\modelupdate_{[n] \setminus \mathcal{T}}; M_{\mathcal{T}} \mid \modelupdate_{\mathcal{T}}, \Dataset_{\mathcal{T}}, \model^{(\globalIteration)}) \nonumber \\
    & = H(\modelupdate_{[n] \setminus \mathcal{T}} | \modelupdate_{\mathcal{T}}, \Dataset_{\mathcal{T}}, \model^{(\globalIteration)}) - H(\modelupdate_{[n] \setminus \mathcal{T}} | M_{\mathcal{T}}, \modelupdate_{\mathcal{T}}, \Dataset_{\mathcal{T}}, \model^{(\globalIteration)}) \nonumber \\
    & = H(\modelupdate_{[n] \setminus \mathcal{T}}, \Dataset_{\mathcal{T}}, \model^{(\globalIteration)}) - H(\Dataset_{\mathcal{T}}, \model^{(\globalIteration)}) \nonumber \\
    &\quad  - H(\modelupdate_{[n] \setminus \mathcal{T}}, M_{\mathcal{T}}, \Dataset_{\mathcal{T}}, \model^{(\globalIteration)}) + H(M_{\mathcal{T}}, \Dataset_{\mathcal{T}}, \model^{(\globalIteration)}), \label{eq:proof_colluding}
\end{align}
where the last equation follows because $\modelupdate_{\mathcal{T}}$ is a deterministic function of $\Dataset_{\mathcal{T}}$ and $\model^{(\globalIteration)}$. We then consider the exchanged messages $M_{\mathcal{T}}$, which include the shares of the normalized local model updates $\normalmodelupdate_{\mathcal{T},i}$ and sub-shares from the re-randomization step in Step C and Step D of the scheme, i.e. when computing the squared $l2$-norm $\| \normalmodelupdate_i \|_2^2$ to check the correctness of the normalization and the two sums $\sumone$ and $\sumtwo$ in the aggregation. We can leverage the privacy guarantees of LCC \cite{yu2019lagrange}, the re-randomization step \cite{gennaro1998simplified}, \cite{asharov2017full} and ITVSS \cite{BGW}. When the number of colluding users $\Colluding$ satisfies $\TotalUser \geq 2\Byzantine+(\approdegree+2)\cdot (\partition+\Colluding-1)+\Dropout+1$, the exchanged messages $M_{\mathcal{T}}$ observed by the colluding users are completely random and independent of $\modelupdate_{[n] \setminus \mathcal{T}}$, $\Dataset_{\mathcal{T}}$ and $\model^{(\globalIteration)}$, i.e., 
\begin{equation*}
\begin{aligned}\label{eq:proof_colluding_equa}
    H(\modelupdate_{[n] \setminus \mathcal{T}}, M_{\mathcal{T}}, \Dataset_{\mathcal{T}}, \model^{(\globalIteration)}) 
    & = H(\modelupdate_{[n] \setminus \mathcal{T}}, \Dataset_{\mathcal{T}}, \model^{(\globalIteration)}) + H(M_{\mathcal{T}}), \\
    H(M_{\mathcal{T}}, \Dataset_{\mathcal{T}}, \model^{(\globalIteration)})
    & = H(\Dataset_{\mathcal{T}}, \model^{(\globalIteration)}) + H(M_{\mathcal{T}}).
\end{aligned}
\end{equation*}
By substituting the above into \eqref{eq:proof_colluding} we prove \eqref{eq:privacy_colluding}, showing that \ourscheme\ is IT private against $\Colluding$ users. 
For privacy against the honest-but-curious federator, we need to prove \eqref{eq:privacy_federator}. We have 
\begin{equation*}\label{eq:proof_federator}
\begin{aligned}
    & I(\modelupdate_{[n] \setminus \mathcal{T}}; M_f \mid \modelupdate_0, \Dataset_0, \model^{(\globalIteration)}, \modelupdate) \\
    & = H(M_f \mid \modelupdate_0, \Dataset_0, \model^{(\globalIteration)}\!\!, \modelupdate) \! - \! H(M_f \mid \modelupdate_{[n] \setminus \mathcal{T}}, \modelupdate_0, \Dataset_0, \model^{(\globalIteration)}\!\!, \modelupdate) \\
\end{aligned} 
\end{equation*}
With regard to the exchanged messages $M_f$, we need to consider: 1) the computed shares of $\| \normalmodelupdate_i \|_2^2$, $\random\sumone$ and $\random\sumtwo$ sent from the users, which are completely random, i.e. uniformly distributed and independent of $\modelupdate_{[n] \setminus \mathcal{T}}, \modelupdate_0, \Dataset_0, \model^{(\globalIteration)}, \modelupdate$, by leveraging the privacy guarantee of the re-randomization step \cite{gennaro1998simplified}, \cite{asharov2017full}, and 2) the reconstructed values of $\| \normalmodelupdate_i \|_2^2$, $\random\sumone$ and $\random\sumtwo$. Since $\| \normalmodelupdate_i \|_2^2$ lies within a certain range for all possible model updates (ideally equivalent to one), the value of $\| \normalmodelupdate_i \|_2^2$ is also independent of $\modelupdate_{[n] \setminus \mathcal{T}}, \modelupdate_0, \Dataset_0, \model^{(\globalIteration)}, \modelupdate$. With respect to $\random\sumone$ and $\random\sumtwo$, $\sumone$ is completely hidden due to the randomness $\random$, but $\random\sumtwo$ may leak the relationships within $\sumtwo$. However, by computing $\sumtwo/\sumone$, which is equivalent to knowing $\random\sumone$ and $\random\sumtwo$, the interrelationships within $\sumtwo$ has to be leaked. Note that the value of $\sumtwo/\sumone$ does not result in any additional information leakage because $H(\sumtwo/\sumone|\modelupdate, \modelupdate_0)=0$, as a result of $\modelupdate= \| \modelupdate_0 \| \cdot \sumtwo / \sumone$. We denote the computed shares as $\bm{c}$, and have
\begin{align*}
    & H(M_f \mid \modelupdate_0, \Dataset_0, \model^{(\globalIteration)}, \modelupdate) \nonumber \\
    &= H(\bm{c}, \| \normalmodelupdate_i \|_2^2, \random\sumone, \random\sumtwo \mid \modelupdate_0, \Dataset_0, \model^{(\globalIteration)}, \modelupdate) \nonumber \\
    &= H(\bm{c}, \| \normalmodelupdate_i \|_2^2, \sumtwo/\sumone \mid \modelupdate_0, \Dataset_0, \model^{(\globalIteration)}, \modelupdate) \nonumber \\
    & = H(\bm{c}, \| \normalmodelupdate_i \|_2^2 \mid \modelupdate_0, \Dataset_0, \model^{(\globalIteration)}, \modelupdate) 
    = H(\bm{c}, \| \normalmodelupdate_i \|_2^2),
\end{align*}
and $H(M_f \mid \modelupdate_{[n] \setminus \mathcal{T}}, \modelupdate_0, \Dataset_0, \model^{(\globalIteration)}, \modelupdate)= H(\bm{c}, \| \normalmodelupdate_i \|_2^2)$. This concludes the proof of IT privacy against the federator. Hence, we conclude the proof of Theorem 1. 
\end{proof}
\begin{figure}[!t]
    \vspace{.05cm}
    \centering
    \input{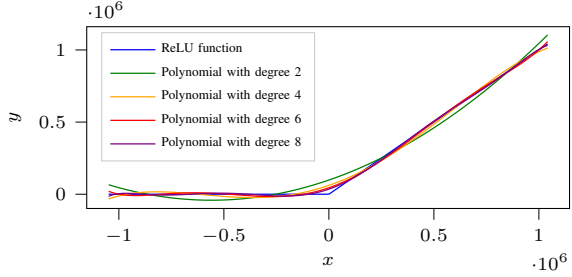}
    \caption{Comparison of the polynomial-approximated ReLU with different degrees, where $y=\ReLU(x)$ for ReLU function and $y=h(x)$ for the approximations. \vspace{-.5cm}}
    \label{fig:relu_approximation}
\end{figure}

\begin{proposition}
    \ourscheme\ requires a user to communicate $O(\frac{\dimension}{\partition}n^3+n^4)$ and the federator $O(\frac{d}{m}n+n^2)$ scalars. The computation cost is $O((\frac{d}{\partition}n^2+n^3)\log^2n\log\log n + (\frac{d}{\partition}n^3+n^4)\log^2n)$ and $O((\frac{d}{m}n+n^2)\log^2n\log\log n)$, respectively.
\end{proposition}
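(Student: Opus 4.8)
The plan is to prove the two bounds phase by phase, since the claimed figures are just the sums of the per-phase communication and computation of \ourscheme. First I would fix the sub-routines that are used as black boxes, together with their costs: (a) one LCC sharing \cite{yu2019lagrange} of an $\ell$-dimensional secret under a degree-$O(\TotalUser)$ encoding polynomial, which by fast multipoint evaluation costs $O(\ell\TotalUser\log^2\TotalUser\log\log\TotalUser)$ field operations to produce and $O(\ell\TotalUser)$ scalars to distribute; (b) the ITVSS of \cite{BGW} layered on top, which adds the bivariate check shares and the complaint/broadcast rounds, contributing $O(\ell\TotalUser+\TotalUser^2)$ scalars per participant and a matching amount of interpolation work; (c) the re-randomization of \cite{gennaro1998simplified,asharov2017full}, which is $\TotalUser$ parallel invocations of (b) (every user sub-shares its current share) followed by a local Lagrange-weighted combination, hence $O(\ell\TotalUser^2+\TotalUser^3)$ scalars per participant; and (d) Reed--Solomon decoding with errors and erasures \cite{mceliece1981sharing} of a length-$O(\TotalUser)$ codeword, at $O(\TotalUser\log^2\TotalUser\log\log\TotalUser)$ field operations. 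I would measure computation in bit operations, charging $O(\log^2\TotalUser)$ per $\mathbb{F}_p$ operation, following the accounting convention of \cite{so2020byzantine,jahani2023byzantine}.

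Next I would instantiate these on Phases A--E. Phase~A is local quantization/normalization, $O(\dimension)$. Phase~B is, per user, one ITVSS of type (b) with $\ell=\dimension/\partition$ plus participation in the other $\TotalUser-1$, i.e.\ $O(\tfrac{\dimension}{\partition}\TotalUser^2+\TotalUser^3)$ scalars, while the federator only broadcasts the public $\normalmodelupdate_0$, $O(\dimension)$ scalars. Phase~C runs $\TotalUser$ norm checks: each is a local degree-raising squaring of the received shares followed by one re-randomization of a \emph{scalar} secret, so $\TotalUser$ invocations of (c) with $\ell=1$, namely $O(\TotalUser^4)$ scalars per user, after which each user sends $\TotalUser$ scalars to the federator. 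Phase~D carries the heaviest $\dimension/\partition$ dependence: for each $i$, forming the summand $h(\langle\normalmodelupdate_0,\normalmodelupdate_i\rangle)\normalmodelupdate_i$ of \sumtwo\ is a multiplication that raises the degree of a $\dimension/\partition$-dimensional share and hence needs one re-randomization, i.e.\ $\TotalUser$ invocations of (c) with $\ell=\dimension/\partition$, giving $O(\tfrac{\dimension}{\partition}\TotalUser^3+\TotalUser^4)$ scalars per user; the analogous handling of \sumone\ and of the \random-masking adds only $O(\TotalUser)$ further scalar re-randomizations, absorbed into $O(\TotalUser^4)$. Each user then sends $O(\tfrac{\dimension}{\partition}+\TotalUser)$ scalars (shares of $\random\sumone,\random\sumtwo$) to the federator, and in Phase~E the federator decodes $\TotalUser$ scalar codewords (the norms), one for $\random\sumone$, and $\dimension/\partition$ for $\random\sumtwo$. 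Summing the per-phase figures gives per-user communication $O(\tfrac{\dimension}{\partition}\TotalUser^3+\TotalUser^4)$ and federator communication $O(\tfrac{\dimension}{\partition}\TotalUser+\TotalUser^2)$; charging $O(\TotalUser\log^2\TotalUser\log\log\TotalUser)$ to every sharing and every decode (first computation group) and $O(\log^2\TotalUser)$ bit operations to each of the $\Theta(\tfrac{\dimension}{\partition}\TotalUser^3+\TotalUser^4)$ elementary field operations accompanying the scalars moved in Phases~C--D (second group) yields the two computation bounds.

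The main obstacle I expect is the bookkeeping rather than any single estimate: one must pin down exactly how many times ITVSS and re-randomization are called and on secrets of which dimension --- in particular that \sumtwo\ forces $\Theta(\TotalUser)$ re-randomizations of $\dimension/\partition$-dimensional secrets (the source of the $\tfrac{\dimension}{\partition}\TotalUser^3$ term) whereas the norm checks and the \sumone/\random\ handling contribute $\Theta(\TotalUser)$ re-randomizations of scalars (the source of the $\TotalUser^4$ term) --- and to keep the two poly-logarithmic factors separate, $\log^2\TotalUser\log\log\TotalUser$ for the FFT-based multipoint evaluation/interpolation inside every LCC/ITVSS sharing and every Reed--Solomon decode versus the bare $\log^2\TotalUser$ for the bit cost of the finite-field multiplications that dominate the raw operation count. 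A secondary point is to verify that the federator never executes the expensive VSS/re-randomization machinery, so its cost collapses to broadcasting $\normalmodelupdate_0$, collecting $O(\TotalUser^2+\tfrac{\dimension}{\partition}\TotalUser)$ shares, and the decoding, which is why its bounds lack the extra factors of $\TotalUser$ present in the per-user bounds.
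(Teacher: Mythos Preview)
Your proposal takes essentially the same route as the paper: quote the per-primitive costs of LCC encoding, ITVSS, re-randomization, and Reed--Solomon decoding, then obtain the four claimed bounds by counting invocations phase by phase; the paper's own proof is only a sketch that states precisely these primitive costs (for scalar secrets) and defers the remainder to ``counting arguments, here omitted for brevity,'' so your per-phase accounting is a faithful expansion of what the paper leaves implicit. One minor divergence worth flagging: the paper applies its $O(n^2)$ ITVSS and $O(n^3)$ re-randomization communication costs coordinate-wise, so a \emph{single} re-randomization of the $\tfrac{d}{m}$-dimensional $\sumtwo$ share (or, equally, the $n$ ITVSS instances in Phase~B) already contributes the $\tfrac{d}{m}n^3$ term, whereas you manufacture it via $\Theta(n)$ vector re-randomizations in Phase~D---an overcount relative to the protocol as written (the summands of $\sumtwo$ are added locally on shares before a constant number of re-randomizations), though one that leaves the final asymptotics unchanged.
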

\begin{proof}[Sketch of Proof]
For each user sharing a single scalar, the computation complexity of encoding in LCC is $O(n\log^2 n\log\log n)$ \cite{yu2019lagrange} and that of ITVSS \cite{BGW} is $O(n^2 \log^2 n)$. Since the re-randomization step \cite{gennaro1998simplified} involves sub-sharing each secret share using VSS and a linear combination of the sub-shares, the computation complexity is $O(n^2\log^2 n\log\log n+n^3 \log^2 n)$. The communication cost for the ITVSS is $O(n^2)$, and $O(n^3)$ for re-randomization. With respect to the federator, the computation cost for decoding a single scalar using error-correction decoding takes $O(n\log^2 n\log\log n)$. The remaining proof follows counting arguments, here omitted for brevity.
\begin{figure}[!t]
    \vspace{.03cm}
    \centering
    \input{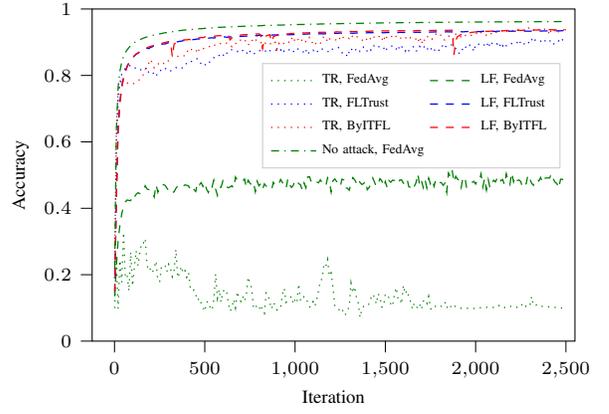}
    \caption{Test accuracy of \ourscheme\ vs. FedAvg \cite{mcmahan2017communication} and FLTrust \cite{cao2020fltrust} under MNIST, uniform data distribution, trim attack (TR) and label flipping (LF) \cite{fang2020local}. The baseline is FedAvg. \vspace{-.5cm}}
    \label{fig:test_accuracy_attack}
\end{figure}
\end{proof}

In Table \ref{tab:complexity}, we compare the communication and computation complexity of \ourscheme\ with respect to $\TotalUser$, $\dimension$ and $\partition$ to the previous solutions BREA and ByzSecAgg. In \ourscheme, the large communication and computation complexity for users stems from the ITVSS scheme and the re-randomization. This is the cost of achieving IT privacy, while BREA and ByzSecAgg rely on computational privacy. By partitioning the updates into $\partition$ sub-vectors, the complexities can be reduced.

\section{Experiments}
We numerically evaluate the convergence and Byzantine-resilience of \ourscheme\ using the approximated ReLU function, and demonstrate it compared to FedAvg and FLTrust. Note that, since we have theoretically established the privacy guarantee, we focus on the performance of the approximated ReLU function against Byzantine attacks in the experiments, which is required by our privacy guarantee.
We consider $10$-class image classification on MNIST with uniform data distribution across $\TotalUser=40$ users and a three-layer neural network. %
$50\%$ of the users are Byzantine ($\Byzantine=20$) and perform either a trim attack, which is an untargeted local model poisoning attack presented in \cite{fang2020local}, or a label flipping attack with the same setting as  \cite{fang2020local}.
We set $q=1024$ and $\approdegree=6$. As in \cite{cao2020fltrust}, the size of the root dataset $\Dataset_0$ is $100$. %
We set $\varepsilon=0.02$ for the normalization check. 
The ReLU function is approximated between $(-q^2, q^2)=(-1024^2, 1024^2)$, which are the values taken by the cosine similarity after quantization. Fig.~\ref{fig:relu_approximation} shows the ReLU approximation with different polynomial degrees $\approdegree\in \{2,4,6,8\}$. %
Fig.~\ref{fig:test_accuracy_attack} shows the convergence of \ourscheme\ compared to FedAvg and FLTrust under the trim attack and label flipping. Notice that FedAvg cannot defend against Byzantine attacks, and \ourscheme\ has a comparable performance to FLTrust.

\section{Conclusion}
We proposed a Byzantine-resilient secure aggregation scheme for FL that guarantees IT privacy. \ourscheme\ requires the federator to hold a root dataset for reference, used to scale the users' local model updates during the aggregation. To enable IT privacy, we used a suitable approximation of ReLU to compute trust scores and secret sharing techniques to ensure user privacy. We analyzed the achieved privacy and complexity of the algorithm. Through experiments, we demonstrated the convergence in the presence of Byzantine users. Extending the established methods to wireless settings such as in \cite{egger2023private} is an interesting direction. Further, we consider the investigation of targeted attacks tailored to \ourscheme\ as important future work.

\balance

\bibliographystyle{IEEEtran}
\bibliography{IEEEabrv,refs}

\end{document}